\documentclass[11pt]{article}

\usepackage[T1]{fontenc}
\usepackage[utf8]{inputenc}
\usepackage{cite}
\usepackage{times}
\usepackage[paperwidth=199.8mm,
paperheight=297mm,centering,hmargin=20mm,vmargin=20mm]{geometry}
\usepackage{authblk} 
\usepackage[bottom]{footmisc} 

\usepackage[titletoc,title]{appendix} 
\usepackage{changepage}

\usepackage{hyperref}
\hypersetup{colorlinks=true,citecolor=myblue,linkcolor=myblue,
filecolor=myblue,urlcolor=myblue,breaklinks=true}
\usepackage{url}

\usepackage[dvipsnames]{xcolor}
\usepackage{color}
\usepackage{framed}

\definecolor{shadecolor}{rgb}{0.9,0.9,0.9}
\definecolor{mylightgray}{RGB}{100,100,100}

\definecolor{myblue}{RGB}{0, 68, 116}
\definecolor{mycyan}{RGB}{0, 97, 91}
\definecolor{mygreen}{RGB}{2, 102, 1}
\definecolor{myorange}{RGB}{240, 102, 0}
\definecolor{myred}{RGB}{172, 23, 0}
\definecolor{mymagenta}{RGB}{140,16,73}

\usepackage{mathtools}
\usepackage{amsmath}
\usepackage{amssymb}
\usepackage[shortlabels]{enumitem}
\usepackage{graphicx,epsfig,latexsym,verbatim}
\usepackage{dsfont}
\usepackage{mathrsfs}

\usepackage{subcaption}
\usepackage{caption}
\usepackage{float}
\usepackage{tikz}
\usetikzlibrary{arrows.meta,positioning,calc,fit,matrix,backgrounds}
\usepackage{relsize}
\usepackage{pgfplots}
\pgfplotsset{compat=1.18}

\usepackage{amsthm}
\usepackage{tcolorbox}
\tcbuselibrary{skins}
\tcbuselibrary{breakable}

\definecolor{myblue}{RGB}{0,68,116}
\definecolor{mycyan}{RGB}{0,97,91}
\definecolor{mygreen}{RGB}{2,102,1}
\definecolor{myorange}{RGB}{240,102,0}
\definecolor{myred}{RGB}{172,23,0}

\newtheorem{theorem}{Theorem}
\newtheorem{lemma}[theorem]{Lemma}

\newtheorem{boxedtheorem}[theorem]{Theorem}
\newtheorem{boxedlemma}[theorem]{Lemma}
\newtheorem{boxedprop}[theorem]{Proposition}
\newtheorem{boxedcorollary}[theorem]{Corollary}
\newtheorem{boxeddefinition}[theorem]{Definition}
\newtheorem{boxedassumption}[theorem]{Assumption}
\newtheorem{boxedremark}[theorem]{Remark}

\newtheorem{boxedexample}[theorem]{Example}

\definecolor{theoremframe}{RGB}{31,78,121}
\definecolor{theoremback}{RGB}{247,250,253}
\definecolor{lemmaframe}{RGB}{35,105,86}
\definecolor{lemmaback}{RGB}{247,252,250}
\definecolor{propframe}{RGB}{102,73,139}
\definecolor{propback}{RGB}{251,249,253}
\definecolor{corollaryframe}{RGB}{165,96,18}
\definecolor{corollaryback}{RGB}{254,251,246}
\definecolor{definitionframe}{RGB}{112,112,112}
\definecolor{definitionback}{RGB}{248,248,248}

\tcbset{
  supplement result/.style={
    enhanced jigsaw,
    breakable,
    lines before break=5,
    sharp corners,
    boxrule=0.55pt,
    left=6pt,
    right=6pt,
    oversize=0pt,
    top=6pt,
    bottom=6pt,
    before skip=9pt plus 2pt minus 1pt,
    after skip=9pt plus 2pt minus 1pt
  }
}
\tcolorboxenvironment{boxedtheorem}{
  supplement result,
  colframe=myblue,
  colback=myblue!5!white
}
\tcolorboxenvironment{boxedlemma}{
  supplement result,
  colframe=definitionframe,
  colback=definitionback
}
\tcolorboxenvironment{boxedprop}{
  supplement result,
  colframe=definitionframe,
  colback=definitionback
}
\tcolorboxenvironment{boxedcorollary}{
  supplement result,
  colframe=definitionframe,
  colback=definitionback
}
\tcolorboxenvironment{boxeddefinition}{
  supplement result,
  colframe=definitionframe,
  colback=definitionback
}
\tcolorboxenvironment{boxedassumption}{
  supplement result,
  colframe=definitionframe,
  colback=definitionback
}
\tcolorboxenvironment{boxedremark}{
  supplement result,
  colframe=myorange!70!black,
  colback=myorange!6!white
}
\tcolorboxenvironment{boxmessage}{
  supplement result,
  colframe=myorange!70!black,
  colback=myorange!6!white
}
\newtcolorbox{boxnote}{
  supplement result,
  colframe=myorange!70!black,
  colback=myorange!6!white
}
\tcolorboxenvironment{boxedexample}{
  supplement result,
  colframe=definitionframe,
  colback=definitionback
}

\newenvironment{boxtheorem}[1][]
  {\begin{boxedtheorem}[#1]}
  {\end{boxedtheorem}}

\newenvironment{boxproposition}[1][]
  {\begin{boxedprop}[#1]}
  {\end{boxedprop}}

\usepackage{colortbl}
\usepackage{pifont} 
\usepackage{booktabs}
\usepackage{makecell} 
\usepackage{diagbox}
\usepackage{multirow}

\usepackage{bm}
\usepackage{microtype}
\usepackage{flafter}
\newcommand{\nc}{\newcommand}
\nc{\rnc}{\renewcommand}

\nc{\<}{\langle}
\rnc{\>}{\rangle}
\nc{\bra}[1]{\langle#1|}
\nc{\ket}[1]{|#1\rangle}
\nc{\ketbra}[2]{|#1\rangle\!\langle#2|}
\nc{\braket}[2]{\langle#1|#2\rangle}
\nc{\braandket}[3]{\langle #1|#2|#3\rangle}
\nc{\proj}[1]{| #1\rangle\!\langle #1 |}
\nc{\avg}[1]{\langle#1\rangle}

\nc{\rank}{\operatorname{Rank}}
\nc{\id}{{\operatorname{id}}}
\nc{\iid}{{\operatorname{iid}}}
\nc{\supp}{{\operatorname{supp}}}
\nc{\smfrac}[2]{\mbox{$\frac{#1}{#2}$}}
\nc{\tr}{\operatorname{Tr}}
\nc{\coeff}{\operatorname{coeff}}
\nc{\ox}{\otimes}
\nc{\floor}[1]{\lfloor #1 \rfloor}
\nc{\trans}{\mathsf T}
\nc{\img}{\mathbf{i}}

\makeatletter

\newcommand{\solidgamearrowpicture@}[3]{%
    \tikz[x=1ex,y=1ex,line cap=round,line join=round]
      \draw[line width=#3,-{Triangle[length=0.64ex,width=0.68ex]}] (0,0) -- (#1,#2);%
}
\newcommand{\opengamearrowpicture@}[3]{%
    \tikz[x=1ex,y=1ex,line cap=round,line join=round]
      \draw[line width=#3,-{Triangle[open,length=0.64ex,width=0.68ex]}] (0,0) -- (#1,#2);%
}
\newsavebox{\solidgameuparrow@display}
\newsavebox{\solidgameuparrow@text}
\newsavebox{\solidgameuparrow@script}
\newsavebox{\solidgameuparrow@scriptscript}
\newsavebox{\solidgamedownarrow@display}
\newsavebox{\solidgamedownarrow@text}
\newsavebox{\solidgamedownarrow@script}
\newsavebox{\solidgamedownarrow@scriptscript}
\newsavebox{\opengameuparrow@display}
\newsavebox{\opengameuparrow@text}
\newsavebox{\opengameuparrow@script}
\newsavebox{\opengameuparrow@scriptscript}
\newsavebox{\opengamedownarrow@display}
\newsavebox{\opengamedownarrow@text}
\newsavebox{\opengamedownarrow@script}
\newsavebox{\opengamedownarrow@scriptscript}
\AtBeginDocument{%
  \sbox{\solidgameuparrow@display}{\solidgamearrowpicture@{0}{1.62}{0.58pt}}%
  \sbox{\solidgameuparrow@text}{\solidgamearrowpicture@{0}{1.50}{0.52pt}}%
  \sbox{\solidgameuparrow@script}{\solidgamearrowpicture@{0}{1.30}{0.44pt}}%
  \sbox{\solidgameuparrow@scriptscript}{\solidgamearrowpicture@{0}{1.12}{0.38pt}}%
  \sbox{\solidgamedownarrow@display}{\solidgamearrowpicture@{0}{-1.62}{0.58pt}}%
  \sbox{\solidgamedownarrow@text}{\solidgamearrowpicture@{0}{-1.50}{0.52pt}}%
  \sbox{\solidgamedownarrow@script}{\solidgamearrowpicture@{0}{-1.30}{0.44pt}}%
  \sbox{\solidgamedownarrow@scriptscript}{\solidgamearrowpicture@{0}{-1.12}{0.38pt}}%
  \sbox{\opengameuparrow@display}{\opengamearrowpicture@{0}{1.62}{0.58pt}}%
  \sbox{\opengameuparrow@text}{\opengamearrowpicture@{0}{1.50}{0.52pt}}%
  \sbox{\opengameuparrow@script}{\opengamearrowpicture@{0}{1.30}{0.44pt}}%
  \sbox{\opengameuparrow@scriptscript}{\opengamearrowpicture@{0}{1.12}{0.38pt}}%
  \sbox{\opengamedownarrow@display}{\opengamearrowpicture@{0}{-1.62}{0.58pt}}%
  \sbox{\opengamedownarrow@text}{\opengamearrowpicture@{0}{-1.50}{0.52pt}}%
  \sbox{\opengamedownarrow@script}{\opengamearrowpicture@{0}{-1.30}{0.44pt}}%
  \sbox{\opengamedownarrow@scriptscript}{\opengamearrowpicture@{0}{-1.12}{0.38pt}}%
}
\newcommand{\gamearrowbox@}[1]{%
  \mathrel{\vcenter{\hbox{%
    \usebox{#1}%
  }}}%
}
\DeclareRobustCommand{\soliduparrow}{%
  \mathchoice
    {\gamearrowbox@{\solidgameuparrow@display}}%
    {\gamearrowbox@{\solidgameuparrow@text}}%
    {\gamearrowbox@{\solidgameuparrow@script}}%
    {\gamearrowbox@{\solidgameuparrow@scriptscript}}%
}
\DeclareRobustCommand{\soliddownarrow}{%
  \mathchoice
    {\gamearrowbox@{\solidgamedownarrow@display}}%
    {\gamearrowbox@{\solidgamedownarrow@text}}%
    {\gamearrowbox@{\solidgamedownarrow@script}}%
    {\gamearrowbox@{\solidgamedownarrow@scriptscript}}%
}
\DeclareRobustCommand{\emptyuparrow}{%
  \mathchoice
    {\gamearrowbox@{\opengameuparrow@display}}%
    {\gamearrowbox@{\opengameuparrow@text}}%
    {\gamearrowbox@{\opengameuparrow@script}}%
    {\gamearrowbox@{\opengameuparrow@scriptscript}}%
}
\DeclareRobustCommand{\emptydownarrow}{%
  \mathchoice
    {\gamearrowbox@{\opengamedownarrow@display}}%
    {\gamearrowbox@{\opengamedownarrow@text}}%
    {\gamearrowbox@{\opengamedownarrow@script}}%
    {\gamearrowbox@{\opengamedownarrow@scriptscript}}%
}

\makeatother

\rnc{\liminf}{\mathop{\underline{\lim}}\displaylimits}
\rnc{\limsup}{\mathop{\overline{\lim}}\displaylimits}

\nc{\cA}{{\cal A}}
\nc{\cB}{{\cal B}}
\nc{\cC}{{\cal C}}
\nc{\cD}{{\cal D}}
\nc{\cE}{{\cal E}}
\nc{\cF}{{\cal F}}
\nc{\cG}{{\cal G}}
\nc{\cH}{{\cal H}}
\nc{\cI}{{\cal I}}
\nc{\cJ}{{\cal J}}
\nc{\cK}{{\cal K}}
\nc{\cL}{{\cal L}}
\nc{\cM}{{\cal M}}
\nc{\cN}{{\cal N}}
\nc{\cO}{{\cal O}}
\nc{\cP}{{\cal P}}
\nc{\cQ}{{\cal Q}}
\nc{\cR}{{\cal R}}
\nc{\cS}{{\cal S}}
\nc{\cT}{{\cal T}}
\nc{\cV}{{\cal V}}
\nc{\cU}{{\cal U}}
\nc{\cX}{{\cal X}}
\nc{\cY}{{\cal Y}}
\nc{\cZ}{{\cal Z}}
\nc{\cW}{{\cal W}}

\nc{\bp}{\boldsymbol{p}}
\nc{\bq}{\boldsymbol{q}}
\nc{\brho}{\boldsymbol{\rho}}
\nc{\bsigma}{\boldsymbol{\sigma}}
\nc{\bomega}{\boldsymbol{\omega}}
\nc{\bmu}{\boldsymbol{\mu}}
\nc{\bT}{\boldsymbol{T}}
\nc{\bS}{\boldsymbol{S}}

\nc{\RR}{{{\mathbb R}}}
\nc{\CC}{{{\mathbb C}}}
\nc{\FF}{{{\mathbb F}}}
\nc{\NN}{{{\mathbb N}}}
\nc{\ZZ}{{{\mathbb Z}}}
\nc{\QQ}{{{\mathbb Q}}}
\nc{\UU}{{{\mathbb U}}}
\nc{\EE}{{{\mathbb E}}}

\nc{\bH}{{\mathfrak{H}}}

\nc{\sK}{{{\mathscr{K}}}}
\nc{\sS}{{{\mathscr{S}}}}
\nc{\sT}{{{\mathscr{T}}}}
\nc{\sA}{{{\mathscr{A}}}}
\nc{\sB}{{{\mathscr{B}}}}
\nc{\sC}{{{\mathscr{C}}}}
\nc{\sE}{{{\mathscr{E}}}}
\nc{\sL}{{{\mathscr{L}}}}
\nc{\sG}{{{\mathscr{G}}}}
\nc{\sF}{{{\mathscr{F}}}}
\nc{\sI}{{{\mathscr{I}}}}
\nc{\sN}{{{\mathscr{N}}}}
\nc{\sM}{{{\mathscr{M}}}}

\nc{\Choi}{Choi-Jamio\l{}kowski }
\nc{\reg}{\infty}
\nc{\mx}{\text{\rm mx}}
\nc{\amo}{\text{\rm amo}}
\nc{\Renyi}{R\'{e}nyi }
\nc{\Stein}{\mathsf{Stein}}

\nc{\conv}{\operatorname{conv}}
\nc{\cvxset}{\mathscr{C}}

\nc{\RM}{{{\mathscr{R}}}}

\nc{\END}{\operatorname{End}}
\nc{\PERM}{\mathfrak{\sigma}}

\nc{\Cone}{\text{\rm Cone}}
\nc{\sep}{{\SEP}}

\nc{\DD}{{{\mathbb D}}}
\nc{\BS}{{\scriptscriptstyle \rm {BS}}}
\nc{\Sand}{{\scriptscriptstyle  \rm S}}
\nc{\Hypo}{{\scriptscriptstyle  \rm H}}
\nc{\Meas}{{\scriptscriptstyle \rm M}}
\nc{\Proj}{{{\scriptscriptstyle \rm P}}}
\nc{\KL}{{{\scriptscriptstyle \rm KL}}}
\nc{\IS}{{{\scriptscriptstyle \rm IS}}}

\nc{\suchthat}{\text{\rm s.t.}}

\nc{\pl}{{\scalebox{0.7}{+}}}
\nc{\HERM}{\mathscr{H}}
\nc{\PSD}{\HERM_{\pl}}
\nc{\PD}{\HERM_{\pl\pl}}
\nc{\density}{\mathscr{D}}
\nc{\subdensity}{\mathscr{D}_\bullet}

\nc{\polarPSD}[1]{{#1}_{\pl}^{\circ}}
\nc{\polarPSDre}[1]{{#1}_{\pl}^{\star}}
\nc{\polarPD}[1]{{#1}_{\pl\pl}^{\circ}}

\nc{\PPT}{\text{\rm PPT}}
\nc{\Rains}{\text{\rm Rains}}
\nc{\WD}{\text{\rm WD}}
\nc{\SEP}{\text{\rm SEP}}
\nc{\PSEP}{\text{\rm PSEP}}
\nc{\CPTP}{\text{\rm CPTP}}
\nc{\POVM}{\text{\rm POVM}}
\nc{\PVM}{\text{\rm PVM}}
\nc{\CP}{\text{\rm CP}}
\nc{\adv}{\text{\rm adv}}
\nc{\spec}{\text{\rm spec}}
\nc{\poly}{\text{\rm poly}}
\nc{\End}{\operatorname{End}}
\nc{\Par}{\operatorname{Par}}
\nc{\RNG}{\operatorname{RNG}}
\nc{\STAB}{\text{\rm STAB}}
\nc{\epi}{\boldsymbol{\operatorname{epi}}}
\nc{\op}{\boldsymbol{\operatorname{op}}}

\makeatletter
\newcommand*\rel@kern[1]{\kern#1\dimexpr\macc@kerna}
\newcommand*\widebar[1]{%
  \begingroup
  \def\mathaccent##1##2{%
    \rel@kern{0.8}%
    \overline{\rel@kern{-0.8}\macc@nucleus\rel@kern{0.2}}%
    \rel@kern{-0.2}%
  }%
  \macc@depth\@ne
  \let\math@bgroup\@empty \let\math@egroup\macc@set@skewchar
  \mathsurround\z@ \frozen@everymath{\mathgroup\macc@group\relax}%
  \macc@set@skewchar\relax
  \let\mathaccentV\macc@nested@a
  \macc@nested@a\relax111{#1}%
  \endgroup
}
\makeatother

\renewcommand{\mathbf}{\boldsymbol}

\renewcommand{\geq}{\geqslant}

\renewcommand{\leq}{\leqslant}

\DeclarePairedDelimiterX{\spr}[2]{\langle}{\rangle}{#1\delimsize\vert#2}
\DeclarePairedDelimiterX{\ceil}[1]{\lceil}{\rceil}{#1}
\DeclarePairedDelimiterX{\abs}[1]{\lvert}{\rvert}{#1}
\DeclarePairedDelimiterX{\norm}[1]{\lVert}{\rVert}{#1}
\DeclarePairedDelimiterX{\size}[1]{\lvert}{\rvert}{#1}
\DeclarePairedDelimiterX{\infdiv}[2]{(}{)}{#1\delimsize\Vert#2}
\DeclarePairedDelimiterX{\infdivc}[3]{(}{)}{#1\delimsize\Vert#2\delimsize\vert#3}
\DeclarePairedDelimiterX{\inner}[2]{\langle}{\rangle}{#1,#2}
\ExplSyntaxOn
\NewDocumentCommand{\multiadjustlimits}{m}
 {
  \group_begin:
  \multiadjustlimits_measure:n { #1 }
  \multiadjustlimits_print:n { #1 }
  \group_end:
 }

\tl_new:N  \l__multiadjustlimits_operator_tl
\tl_new:N  \l__multiadjustlimits_limit_tl

\cs_new_protected:Nn \multiadjustlimits_measure:n
 {
  \clist_map_function:nN { #1 } \__multiadjustlimits_measure:n
 }
\cs_new_protected:Nn \__multiadjustlimits_measure:n
 {
  \__multiadjustlimits_measure:NNn #1
 }
\cs_new_protected:Nn \__multiadjustlimits_measure:NNn
 {
  \tl_put_right:Nn \l__multiadjustlimits_operator_tl { #1 }
  \tl_put_right:Nn \l__multiadjustlimits_limit_tl { #3 }
 }

\cs_new_protected:Nn \multiadjustlimits_print:n
 {
  \clist_map_function:nN { #1 } \__multiadjustlimits_print:n
 }
\cs_new_protected:Nn \__multiadjustlimits_print:n
 {
  \__multiadjustlimits_print:NNn #1
 }
\cs_new_protected:Nn \__multiadjustlimits_print:NNn
 {
  \mathop { \vphantom{\l__multiadjustlimits_operator_tl} \mathopen{} #1 }
  \limits
  \sb{ \vphantom{\cramped{\l__multiadjustlimits_limit_tl}} #3 }
 }

\ExplSyntaxOff

\usepackage{framed}
\usepackage[dvipsnames]{xcolor} \usepackage{color}
\definecolor{shadecolor}{rgb}{0.9,0.9,0.9}

\newcommand{\G}{\cA}
\newcommand{\Nch}{\cN}

\newcommand{\Tr}{\tr}
\newcommand{\EF}{E_{\mathrm F}}
\newcommand{\epsfun}{f}

\title{\LARGE \textbf{A pretty-tight converse on the classical capacity\\ of generalized amplitude-damping channels}}
\author[1]{Kun Fang}
\affil[1]{\small School of Data Science, The Chinese University of Hong Kong, Shenzhen,\protect\\
Guangdong, 518172, China}
\date{\today}

\begin{document}
\maketitle

\begin{abstract}
The classical capacity of a quantum channel is the highest rate at which classical information can be transmitted with vanishing error. Although the capacities of many fundamental quantum channels have been established, the capacity of the generalized amplitude-damping channel remains an open problem in quantum Shannon theory. In this work, we revisit a converse bound introduced by Brand\~ao et al. [Phys. Rev. Lett. 106, 230502 (2011)], which upper-bounds the classical capacity in terms of a correlation measure. For the generalized amplitude-damping channel, we derive a relaxation of this measure that makes the converse bound efficiently computable. Numerical comparisons demonstrate that our bound is pretty tight in general and nearly coincides with the Holevo information at nonzero temperatures, significantly improving upon existing bounds across the entire parameter range.
\end{abstract}

{
\small
\tableofcontents
}

\section{Introduction}
\label{sec:intro}

The classical capacity $C(\cN)$ of a quantum channel is the highest rate at
which classical information can be transmitted with vanishing error.
The Holevo--Schumacher--Westmoreland theorem gives
\begin{align}
 C(\Nch)=\lim_{n\to\infty}\frac{1}{n}\chi(\Nch^{\otimes n}),
 \label{eq:hsw}
\end{align}
where $\chi$ denotes the Holevo information
\cite{Holevo1973,Holevo1998,SchumacherWestmoreland1997}. Evaluating this
regularized expression generally requires an optimization over
arbitrarily many channel uses~\cite{Hastings2009}.  Even for qubit channels,
exact, computable expressions for communication capacities are known
only in special cases. In particular, determining the classical
capacity of the amplitude-damping channel over its entire parameter
range remains a major open problem in quantum Shannon theory.

The generalized amplitude-damping channel (GADC) extends amplitude
damping to a two-level system coupled to a thermal bath at nonzero
temperature \cite{KhatriSharmaWilde2020}. It is characterized by a
damping probability $\gamma$ and an equilibrium occupation probability
$N$. The channel describes the spin-relaxation ($T_1$) process arising
from energy exchange with a thermal environment. Such relaxation is
also a source of noise in superconducting-circuit-based quantum
computing~\cite{ChirolliBurkard2008}. It can also be used to characterize losses in linear optical systems in the presence of low-temperature background noise~\cite{ZouEtAl2017}. When the thermal bath
is at zero temperature, the GADC reduces to the
amplitude-damping channel.

Despite its simple structure, the classical capacity of the GADC remains
unknown in general. Earlier approaches include the entanglement-assisted
bound, evaluated for the GADC by Hou and Fang~\cite{HouFang2007EA}, and
Filippov's upper bound for nonunital qubit channels~\cite{Filippov2018}.
Wang, Xie, and Duan introduced semidefinite-programming (SDP) bounds for general
quantum channels~\cite{WangXieDuan2018}.
Leditzky, Kaur, Datta, and Wilde developed bounds based on approximate
covariance and proximity to entanglement-breaking
channels~\cite{LeditzkyKaurDattaWilde2018}.
Khatri, Sharma, and Wilde subsequently evaluated and compared these bounds
for the GADC, deriving an analytic expression for the SDP
bounds~\cite{KhatriSharmaWilde2020}.
Fang and Fawzi obtained further improvements using the geometric
\Renyi divergence~\cite{FangFawzi2021}. Fawzi, Shayeghi, and Ta subsequently developed a symmetry-reduced hierarchy
of multi-copy SDP bounds, obtaining small improvements for amplitude
damping~\cite{FawziShayeghiTa2022}.
A significant gap nevertheless remains between the resulting upper bounds
and the achievable one-copy Holevo information.

In this work, we revisit a converse bound by Brand{\~a}o, Eisert,
Horodecki, and Yang~\cite{BrandaoEtAl2011}. It upper-bounds classical
capacity by output entropy minus an output--environment correlation measure.
For the GADC, we derive a computable relaxation of this measure that retains
its dependence on the input occupation probability.
We obtain complementary correlation penalties from measurements on the
environment and the output. Koashi--Winter duality~\cite{KoashiWinter2004}
and two-qubit concurrence formulas~\cite{KonradEtAl2008,Wootters1998}
reduce the environment-measurement estimate to scalar minimization and
convexification. A fixed output measurement gives the second penalty,
which is already convex. At each input occupation, we subtract the larger
penalty from the output entropy before maximizing over that variable. The resulting bound improves the known bounds across the entire parameter range. Notably, the new bound is pretty tight
at positive temperature, where it nearly matches the
achievable Holevo information.

\section{Generalized amplitude-damping channels}
\label{sec:setup}

For $\gamma,N\in[0,1]$, the GADC acts as
\begin{align}
 \G_{\gamma,N}:\quad \rho(q,z)&:=\begin{pmatrix}1-q&z\\[0.2cm] z^*&q\end{pmatrix}\quad 
 \longmapsto \quad
\begin{pmatrix}
 1-t_{\gamma,N}(q)&\sqrt{1-\gamma}\,z\\[0.2cm]
 \sqrt{1-\gamma}\,z^*&t_{\gamma,N}(q)
 \end{pmatrix},
 \label{eq:gadc-action}
\end{align}
where $q\in[0,1]$, $|z|^2\leq q(1-q)$, and
$t_{\gamma,N}(q):=\gamma N+(1-\gamma)q$.
The quantities $q$ and $t_{\gamma,N}(q)$ are the occupation probabilities of $\ket1$
at the input and output, respectively. 
Let $V_{\gamma,N}:A\to BE$ be a Stinespring isometry, with output system
$B$ and environment system $E$. The channel and its complement satisfy
\begin{align}
 \G_{\gamma,N}(\rho)
 &=\Tr_E[V_{\gamma,N}\rho V_{\gamma,N}^{\dagger}],\qquad
 \G_{\gamma,N}^{c}(\rho)
 =\Tr_B[V_{\gamma,N}\rho V_{\gamma,N}^{\dagger}].
 \label{eq:complementary-channel}
\end{align}

We write $S$ for the von Neumann entropy in bits
and $h_2(x)=-x\log_2x-(1-x)\log_2(1-x)$ for the binary entropy.
A qubit state $\rho$ has eigenvalues $(1\pm\sqrt{1-4\det\rho})/2$,
so $S(\rho)=\epsfun(\det\rho)$, where
\begin{align}
 \epsfun(d)&=h_2\!\left(\frac{1+\sqrt{1-4d}}{2}\right),
 \qquad 0\leq d\leq\frac14.
 \label{eq:entropy-function}
\end{align}
For a pure input, $|z|^2=q(1-q)$, so the determinant of the output state
in \eqref{eq:gadc-action} is
\begin{align}
 D_{\gamma,N}(q)
 &=\gamma\big[(1-\gamma)(q-N)^2+N(1-N)\big].
 \label{eq:pure-output-determinant}
\end{align}
The one-copy Holevo information provides the achievable benchmark
$\chi(\G_{\gamma,N})\leq C(\G_{\gamma,N})$ and is given by
\cite{LiZhenMaoFa2007,KhatriSharmaWilde2020}
\begin{align}
 \chi(\G_{\gamma,N})
 =\max_{0\leq q\leq1}
 \left[h_2(t_{\gamma,N}(q))-\epsfun(D_{\gamma,N}(q))\right].
 \label{eq:holevo-gadc}
\end{align}
By Wootters' formula~\cite{Wootters1998}, the concurrence of the
normalized Choi state of $\G_{\gamma,N}$ is
\begin{align}
 c_{\gamma,N}
 &=\left[\sqrt{1-\gamma}-\gamma\sqrt{N(1-N)}\right]_+,
 \label{eq:directional-kappa}
\end{align}
where $[x]_+=\max\{x,0\}$.

\section{A converse bound for the classical capacity}
\label{sec:correlation}

We start from a converse bound by Brand{\~a}o, Eisert,
Horodecki, and Yang~\cite{BrandaoEtAl2011}. It upper-bounds classical
capacity by output entropy minus an output--environment correlation measure.
They applied this framework to amplitude damping by fixing a particular choice of measurements
on the environment and numerically optimizing the resulting bound over
mixed-state input ensembles~\cite[Fig.~2]{BrandaoEtAl2011}.
This yielded nontrivial upper bounds below the maximal output entropy
for $0<\gamma\leq1/2$.

Here, we exploit the structure of the GADC to reduce the computation to a scalar optimization and convexification problem, yielding a much more precise estimate.
In particular, we obtain two complementary penalties by measuring the environment or
the output, denoted by $g_{\gamma,N}^{E}$ and $g_{\gamma,N}^{B}$, respectively.
Their superscripts identify which subsystem is measured. Using the
quantities from Section~\ref{sec:setup}, define
\begin{align}
 g_{\gamma,N}^{E}(q)&:=\operatorname{co}\ell_{\gamma,N}(q),\quad \ell_{\gamma,N}(q)
 :=\min_{0\leq\delta\leq q(1-q)}
 \left[
 \epsfun(D_{\gamma,N}(q)+(1-\gamma)\delta)
 -\epsfun(c_{\gamma,N}^2\delta)
 \right],
 \label{eq:ell-thermal}\\
 g_{\gamma,N}^{B}(q)
 &:=h_2\!\left(p_{\gamma,N}^{+}(q)\right)
  +h_2\!\left(p_{\gamma,N}^{-}(q)\right)
  -\epsfun(\gamma q(1-q)),
 \label{eq:output-measurement-g}
\end{align}
where 
\begin{align}
 p_{\gamma,N}^{\pm}(q)
 &=\frac{1\pm\gamma(q-N)
 -\sqrt{[1-\gamma(q+N)]^2+4\gamma(1-\gamma)qN}}{2}
 \in[0,1].
 \label{eq:complementary-binary-probabilities}
\end{align}
Here $\delta$ ranges over the possible input determinants at fixed $q$,
and $\operatorname{co}$ denotes the lower convex envelope on $[0,1]$.
Taking the larger penalty at each occupation gives the following bound.

\begin{boxtheorem}
\label{thm:main-capacity}
For every $\gamma,N\in[0,1]$,  let
\begin{align}
  U_G(\gamma,N)
 &:=\max_{0\leq q\leq1}
 \left[h_2(t_{\gamma,N}(q))
       -\max\{g_{\gamma,N}^{E}(q),g_{\gamma,N}^{B}(q)\}\right].
 \label{eq:ug}
\end{align}
Then for every positive integer $n$,
\begin{align}
 \chi(\G_{\gamma,N}^{\otimes n})
 &\leq n U_G(\gamma,N),\qquad C(\G_{\gamma,N})\leq U_G(\gamma,N).
 \label{eq:capacity-bound}
\end{align}
\end{boxtheorem}

To bound the Holevo information for arbitrarily many channel uses, we
separate an output-entropy term from a penalty for output--environment
correlations. Step 1 lower-bounds this penalty by a sum of single-copy
$G$ measures, without assuming product inputs.
For the GADC, Step 2 upper-bounds the entropy term using the input
occupation probabilities. Steps 3 and 4 then lower-bound $G^{E}$ and
$G^{B}$ by $g_{\gamma,N}^{E}$ and $g_{\gamma,N}^{B}$, using two-qubit
entanglement formulas and a fixed output measurement, respectively.
Finally, concavity of the entropy bound and convexity of the penalties
reduce both estimates to the mean input occupation. Choosing the larger
penalty and optimizing over this single variable gives $U_G(\gamma,N)$.

\paragraph{Step 1: Bounding the Holevo information.}
For a bipartite state $\omega_{BE}$, a classical correlation measure obtained
by measuring the system $E$ is~\cite{HendersonVedral2001}
\begin{align}
 C^{E}(\omega)
 &=S(\omega_B)-\inf_{\{M_x\}}\sum_xp_xS(\omega_B^x),
 &p_x\omega_B^x
 &=\Tr_E[(I_B\otimes M_x)\omega],
 \label{eq:classical-correlation}
\end{align}
where the infimum is over POVMs on $E$.
Throughout, superscripts on $C$, $G$, and $g$ identify the measured subsystem.
The convex-roof extension of $C^{E}$ is defined by~\cite{BrandaoEtAl2011}
\begin{align}
 G^{E}(\omega)
 =\inf_{\omega=\sum_j\lambda_j\omega_j}
       \sum_j\lambda_j C^{E}(\omega_j).
 \label{eq:convex-roof-correlation}
\end{align}
Exchanging $B$ and $E$ defines $C^{B}$ and $G^{B}$, which instead
use measurements on the output system $B$.

Let $V:A\to BE$ be a Stinespring isometry for an arbitrary channel
$\Nch$, and set $\Omega=V^{\otimes n}\rho_{A^n}(V^\dagger)^{\otimes n}$.
It is known that
\begin{align}
 \chi(\Nch^{\otimes n})
 &=\max_{\rho_{A^n}}
 \left[S(B^n)_\Omega-\EF(B^n:E^n)_\Omega\right]
 \label{eq:msw}\\
 &\leq\max_{\rho_{A^n}}
 \left[S(B^n)_\Omega-
 \sum_{i=1}^n\max\!\left\{
 G^{E_i}(\Omega_{B_iE_i}),
 G^{B_i}(\Omega_{B_iE_i})\right\}\right],
 \label{eq:holevo-G-bound}
\end{align}
where $E_F$ is the entanglement of formation.
The equality is the Matsumoto--Shimono--Winter representation
\cite{MatsumotoShimonoWinter2004}. The inequality follows from
\begin{align}
 \EF(B^n:E^n)_\Omega
 &\geq\sum_{i=1}^n\max\!\left\{
 G^{E_i}(\Omega_{B_iE_i}),
 G^{B_i}(\Omega_{B_iE_i})\right\}.
 \label{eq:behy}
\end{align}
To obtain this form, apply the chain rule
\cite[Eq.~(13)]{BrandaoEtAl2011}
\begin{align}
 \EF(B^n:E^n)_\Omega
 &\geq \EF(B^{n-1}:E^{n-1})_\Omega
       +G^{E_n}(\Omega_{B_nE_n}).
 \label{eq:behy-residual}
\end{align}
Exchanging the two global parties gives the same residual term with
$G^{B_n}$ in place of $G^{E_n}$.
We may therefore choose the larger term at each step and iterate.

\paragraph{Step 2: Estimating the output entropy.}
We now take $\Nch=\G_{\gamma,N}$ and $V=V_{\gamma,N}$.
For an arbitrary input $\rho_{A^n}$, write the occupation of each marginal state as
$q_i=\bra1\rho_{A_i}\ket1$ and $\bar q=n^{-1}\sum_iq_i$.

The entropy term in \eqref{eq:holevo-G-bound}
satisfies
\begin{align}
 S(B^n)_\Omega
 &\leq\sum_iS(B_i)_\Omega
 \leq\sum_i h_2(t_{\gamma,N}(q_i))
 \leq n h_2(t_{\gamma,N}(\bar q)).
 \label{eq:output-entropy-estimate}
\end{align}
The first inequality is entropy subadditivity. The second follows by
dephasing each output qubit in the energy basis, whose occupation
probability is $t_{\gamma,N}(q_i)$. The last inequality uses the concavity
of binary entropy and the affine dependence of
$t_{\gamma,N}(q)$ on $q$.

\paragraph{Step 3: Penalty by measuring the environment system.}
For any input state $\rho(q,z)$, let
$\omega(q,z)=V_{\gamma,N}\rho(q,z)V_{\gamma,N}^\dagger$ be the output on system $BE$. We will show that
\begin{align}
 G^{E}(\omega(q,z))\geq g_{\gamma,N}^{E}(q).
 \label{eq:exact-G}
\end{align}
Let $\ket\psi_{AR}$ purify $\rho(q,z)$, with $R$ a qubit reference.
The state $\ket\Psi_{BER}=(V_{\gamma,N}\otimes I_R)\ket\psi_{AR}$ is pure
and has $BE$ marginal $\omega(q,z)$. Writing $\delta=\det\rho(q,z)$,
we obtain
\begin{align}
 C^{E}(\omega(q,z))
 &=S(B)_\omega-\EF(B:R)\nonumber\\
 &=\epsfun(D_{\gamma,N}(q)+(1-\gamma)\delta)
   -\epsfun\!\left(\frac{\mathcal C(B:R)^2}{4}\right)\nonumber\\
 &=\epsfun(D_{\gamma,N}(q)+(1-\gamma)\delta)
   -\epsfun(c_{\gamma,N}^2\delta)\nonumber\\
   & \geq\ell_{\gamma,N}(q).
 \label{eq:exact-C}
\end{align}
Here $\mathcal C$ denotes two-qubit concurrence, and the quantities on
$B:R$ are evaluated on the corresponding marginal of $\ket\Psi$.
The first equality is Koashi--Winter duality
\cite[Corollary 2]{KoashiWinter2004}.
For the second equality, the output matrix in \eqref{eq:gadc-action}
gives
\begin{align}
 \det\G_{\gamma,N}(\rho(q,z))
 &=t_{\gamma,N}(q)[1-t_{\gamma,N}(q)]-(1-\gamma)|z|^2\nonumber\\
 &=t_{\gamma,N}(q)[1-t_{\gamma,N}(q)]
   -(1-\gamma)q(1-q)+(1-\gamma)\delta\nonumber\\
 &=D_{\gamma,N}(q)+(1-\gamma)\delta,
 \label{eq:mixed-output-determinant}
\end{align}
where $\delta=\det\rho(q,z)=q(1-q)-|z|^2$.
The last line follows by substituting $t_{\gamma,N}(q)$ and
\eqref{eq:pure-output-determinant}, and the output entropy follows from
\eqref{eq:entropy-function}.
For the two-qubit marginal on $BR$, Wootters' formula gives~\cite{Wootters1998}
\begin{align}
 \EF(B:R)
 &=h_2\!\left(\frac{1+\sqrt{1-\mathcal C(B:R)^2}}{2}\right)
 =\epsfun\!\left(\frac{\mathcal C(B:R)^2}{4}\right).
 \label{eq:reference-entanglement}
\end{align}
For the last equality in \eqref{eq:exact-C}, consider the normalized
Choi state
$J_{\gamma,N}=(\operatorname{id}_R\otimes\G_{\gamma,N})
(\ketbra{\Phi^+}{\Phi^+}_{RA})$,
where $\ket{\Phi^+}_{RA}=(\ket{00}+\ket{11})/\sqrt2$.
We have~\cite{Wootters1998}
\begin{align}
 \mathcal C(J_{\gamma,N})
 &=\left[\sqrt{1-\gamma}-\gamma\sqrt{N(1-N)}\right]_+
 =c_{\gamma,N}.
 \label{eq:concurrence-factorization}
\end{align}
Consequently,
\begin{align}
 \mathcal C(B:R)
 &=\mathcal C(J_{\gamma,N})\,\mathcal C(\ket\psi_{AR})
 =2c_{\gamma,N}\sqrt{\det\rho(q,z)}
 =2c_{\gamma,N}\sqrt\delta.
 \label{eq:output-reference-concurrence}
\end{align}
The first equality is the concurrence factorization law for a pure
two-qubit input under a channel acting on one qubit
\cite{KonradEtAl2008}. The second uses
$\mathcal C(\ket\psi_{AR})=2\sqrt{\det\rho(q,z)}$, since $\rho(q,z)$ is
the reduced state of the input purification. Substituting this
concurrence into \eqref{eq:reference-entanglement} yields
$\EF(B:R)=\epsfun(c_{\gamma,N}^2\delta)$, establishing the last equality in
\eqref{eq:exact-C}. The final inequality follows from the minimization in
\eqref{eq:ell-thermal}, since $0\leq\delta\leq q(1-q)$.

To pass from $C^{E}$ to its convex roof $G^{E}$, consider any decomposition
$\omega(q,z)=\sum_j\lambda_j\omega_j$ with $\lambda_j>0$.
For any vector $\ket v$ orthogonal to the range of $V_{\gamma,N}$,
\begin{align}
 0=\bra v\omega(q,z)\ket v
   =\sum_j\lambda_j\bra v\omega_j\ket v.
 \label{eq:decomposition-support}
\end{align}
Every summand is nonnegative, so $\bra v\omega_j\ket v=0$ for each $j$.
Positivity then implies $\omega_j\ket v=0$, showing that each
$\omega_j$ is supported on the range of $V_{\gamma,N}$.
Thus $\rho_j:=V_{\gamma,N}^{\dagger}\omega_jV_{\gamma,N}$ is a
normalized qubit state and
\begin{align}
 \omega_j&=V_{\gamma,N}\rho_jV_{\gamma,N}^{\dagger},\qquad
 \rho(q,z)=\sum_j\lambda_j\rho_j.
 \label{eq:input-decomposition}
\end{align}
The first identity uses the support property; the second follows by
applying $V_{\gamma,N}^{\dagger}(\cdot)V_{\gamma,N}$ to the decomposition of $\omega(q,z)$.
Writing $\rho_j=\rho(q_j,z_j)$ gives $\omega_j=\omega(q_j,z_j)$
and $\sum_j\lambda_jq_j=q$. We therefore obtain
\begin{align}
 \sum_j\lambda_j C^{E}(\omega(q_j,z_j))
 &\geq\sum_j\lambda_j\ell_{\gamma,N}(q_j)
 \geq\sum_j\lambda_jg_{\gamma,N}^{E}(q_j)
 \geq g_{\gamma,N}^{E}(q).
 \label{eq:roof-lower}
\end{align}
The first inequality follows from \eqref{eq:exact-C},
the second from $g_{\gamma,N}^{E}\leq\ell_{\gamma,N}$,
and the third from convexity and $\sum_j\lambda_jq_j=q$.
Taking the infimum proves the lower bound in \eqref{eq:exact-G} for
arbitrary coherence.

\paragraph{Step 4: Penalty by measuring the output system.}
We now prove the bound by measuring the output system $B$. We aim to show that
\begin{align}
 G^{B}(\omega(q,z))
 &\geq g_{\gamma,N}^{B}(q).
 \label{eq:output-measurement-G-bound}
\end{align}
We first lower-bound $C^{B}$ using a fixed measurement on $B$.
Symmetry and convexity then show that the resulting correlation is
minimized, at fixed $q$, by the diagonal input $\rho(q,0)$.
Finally, convexity in $q$ allows us to pass from $C^{B}$ to its convex
roof $G^{B}$.

Write $z=|z|e^{i\theta}$ and let
$U_\theta=\operatorname{diag}(1,e^{i\theta})$, choosing $\theta=0$ when
$z=0$. Then $U_\theta\rho(q,z)U_\theta^\dagger=\rho(q,|z|)$.
The channel action in \eqref{eq:gadc-action} gives
$\G_{\gamma,N}(U_\theta\rho U_\theta^\dagger)
=U_\theta\G_{\gamma,N}(\rho)U_\theta^\dagger$ for every input $\rho$.
By unitary equivalence of Stinespring dilations, there is an
environment unitary $W_\theta$ such that
\begin{align}
 \omega(q,|z|)
 &=(U_\theta\otimes W_\theta)\omega(q,z)
   (U_\theta\otimes W_\theta)^\dagger,\nonumber\\
 C^{B}(\omega(q,|z|))&=C^{B}(\omega(q,z)).
 \label{eq:output-phase-invariance}
\end{align}
The second identity holds because local unitaries preserve all
entropies and map the set of output measurements onto itself.
Thus a lower bound depending only on $q$ can be proved with $z$ real.

We measure $B$ in the Pauli-$Y$ basis, with outcomes $x\in\{0,1\}$
and projectors $\Pi_x=(I+(-1)^xY)/2$.
Let $p_x$ be the probability of outcome $x$ and $\omega_E^x$ the
corresponding normalized environment state. The correlation obtained
from this measurement is
\begin{align}
 C^{B}(\omega(q,z))\geq C_Y^{B}(q,z)
 &:=S(\omega_E)-\sum_{x=0}^{1}p_xS(\omega_E^x)=S\!\left(\G_{\gamma,N}^{c}(\rho(q,z))\right)
   -\epsfun(\gamma\delta),
 \label{eq:output-measurement-fixed-correlation}
\end{align}
where $\delta=\det\rho(q,z)$.

For real $z$, the channel action gives the outcome probabilities and
adjoint effects
\begin{align}
 p_x&=\Tr(\Pi_x\omega_B)=\frac12,\qquad
 \G_{\gamma,N}^{\dagger}(\Pi_x)
 =\frac{I+(-1)^x\sqrt{1-\gamma}\,Y}{2},\qquad
 \det\G_{\gamma,N}^{\dagger}(\Pi_x)=\frac\gamma4.
 \label{eq:output-measurement-effects}
\end{align}
Both conditional environment states therefore have entropy
\begin{align}
 S(\omega_E^x)
 &=\epsfun\!\left(
 \frac{\det\rho(q,z)\det\G_{\gamma,N}^{\dagger}(\Pi_x)}{p_x^2}
 \right)=\epsfun(\gamma\delta).
 \label{eq:output-measurement-conditional-entropy}
\end{align}
Indeed, since $\Pi_x$ is rank one, $\omega_E^x$ has the same nonzero
spectrum as the trace-one $2\times2$ matrix
$\sqrt{\rho(q,z)}\,\G_{\gamma,N}^{\dagger}(\Pi_x)\sqrt{\rho(q,z)}/p_x$.
This is the standard equality of the nonzero spectra of $AA^\dagger$
and $A^\dagger A$; the entropy then follows from
\eqref{eq:entropy-function}. This proves
\eqref{eq:output-measurement-fixed-correlation}.

We next show why the diagonal input gives a lower bound at fixed $q$.
The measured correlation admits the relative-entropy representation
\begin{align}
 C_Y^{B}(q,z)
 &=\frac12\sum_{x=0}^{1}D(\omega_E^x\Vert\omega_E),
 \label{eq:output-measurement-convexity}
\end{align}
where $D(\sigma\Vert\tau)=\Tr[\sigma(\log_2\sigma-\log_2\tau)]$.
On the convex set of inputs with real coherence, $p_x=1/2$ for both
outcomes. In particular,
\begin{align}
 \omega_E^x
 &=\frac{\Tr_B[(\Pi_x\otimes I_E)\omega(q,z)]}{p_x}
 =2\Tr_B[(\Pi_x\otimes I_E)\omega(q,z)].
 \label{eq:output-conditional-affinity}
\end{align}
The unnormalized conditional state depends linearly on the input,
and division by the constant $p_x=1/2$ preserves this dependence.
Thus both $\omega_E$ and $\omega_E^x$ depend affinely on $\rho(q,z)$.
Joint convexity of relative entropy therefore makes $C_Y^{B}$ convex
in $(q,z)$ on the real-coherence plane.
Moreover, conjugating the input by Pauli $Z$ changes $z$ to $-z$,
interchanges outcomes $0$ and $1$, and induces a unitary on $E$.
The measured correlation is therefore unchanged, and
\begin{align}
 C^{B}(\omega(q,z))
 &\geq C_Y^{B}(q,z)
 =\frac{C_Y^{B}(q,z)+C_Y^{B}(q,-z)}2
 \geq C_Y^{B}(q,0).
 \label{eq:output-measurement-correlation-lower}
\end{align}
The first inequality uses the fixed measurement, the equality uses
the symmetry, and the last inequality uses convexity together with
$\rho(q,0)=[\rho(q,z)+\rho(q,-z)]/2$.
Equation~\eqref{eq:output-phase-invariance} extends this lower bound
on $C^{B}$ to every complex $z$, with the same occupation $q$.
The convexity argument concerns the full measured correlation, not its
two entropy terms separately.

For the diagonal input $\rho(q,0)$, choose a four-dimensional Kraus
realization, which does not change the complementary-output entropy.
With $a=\gamma(1-N)q$ and $b=\gamma N(1-q)$, the environment state
takes the block form
\begin{align}
 \omega_E
 &=a\oplus b\oplus
 \begin{pmatrix}
 (1-N)(1-\gamma q)&\sqrt{(1-\gamma)N(1-N)}\\
 \sqrt{(1-\gamma)N(1-N)}&N(1-\gamma+\gamma q)
 \end{pmatrix}.
 \label{eq:diagonal-environment-blocks}
\end{align}
The $2\times2$ block has trace $1-a-b$ and determinant $ab$.
The full spectrum thus factors into binary distributions with
probabilities $p_{\gamma,N}^{+}(q)$ and $p_{\gamma,N}^{-}(q)$
from \eqref{eq:complementary-binary-probabilities}.
Using \eqref{eq:output-measurement-conditional-entropy} with
$\delta=q(1-q)$ gives
\begin{align}
 C_Y^{B}(q,0)
 &=h_2(p_{\gamma,N}^{+}(q))+h_2(p_{\gamma,N}^{-}(q))
   -\epsfun(\gamma q(1-q))
 =g_{\gamma,N}^{B}(q).
 \label{eq:output-measurement-diagonal-correlation}
\end{align}

Finally, $g_{\gamma,N}^{B}$ is convex in $q$, because it is the
restriction of the convex function $C_Y^{B}$ to diagonal inputs.
By the decomposition argument in Step~3, every ensemble for
$\omega(q,z)$ has components $\omega(q_j,z_j)$ with
$\sum_j\lambda_jq_j=q$. Hence
\begin{align}
 \sum_j\lambda_j C^{B}(\omega(q_j,z_j))
 &\geq\sum_j\lambda_jg_{\gamma,N}^{B}(q_j)
 \geq g_{\gamma,N}^{B}(q).
 \label{eq:output-measurement-roof-lower}
\end{align}
The first inequality uses the lower bound just established; the second uses convexity.
Taking the infimum over all decompositions proves
\eqref{eq:output-measurement-G-bound}, without any further
convexification of $g_{\gamma,N}^{B}$.

\medskip
Each marginal $\Omega_{B_iE_i}$ has the form $\omega(q_i,z_i)$.
Combining the environment-measurement and output-measurement bounds with
\eqref{eq:behy} gives
\begin{align}
 \EF(B^n:E^n)_\Omega
 &\geq\sum_i\max\{g_{\gamma,N}^{E}(q_i),
                         g_{\gamma,N}^{B}(q_i)\}
\geq n\max\{g_{\gamma,N}^{E}(\bar q),
                         g_{\gamma,N}^{B}(\bar q)\}.
 \label{eq:many-copy-entropies}
\end{align}
The last inequality follows because the maximum of two convex
functions is convex. Both terms in the Holevo bound are now controlled
by the same mean occupation $\bar q$.

Substituting \eqref{eq:output-entropy-estimate} and
\eqref{eq:many-copy-entropies} into \eqref{eq:msw} gives
\begin{align}
 \chi(\G_{\gamma,N}^{\otimes n})
 &\leq n\max_{0\leq q\leq1}
       \left[h_2(t_{\gamma,N}(q))
       -\max\{g_{\gamma,N}^{E}(q),g_{\gamma,N}^{B}(q)\}\right]
 =nU_G(\gamma,N).
\end{align}
Dividing by $n$ and using \eqref{eq:hsw} proves the capacity bound. This concludes the proof of Theorem~\ref{thm:main-capacity}.

\bigskip
Appendix~\ref{sec:penalty-optimality} shows that both correlation
penalties are exact on diagonal inputs and hence are the largest
lower bounds on the respective $G$ measures that depend only on $q$.

\section{Numerical comparison}
\label{sec:numerics}

We compare $U_G$ in \eqref{eq:ug} with the achievable Holevo rate
$\chi$ in \eqref{eq:holevo-gadc} and two previous benchmarks.
These are the envelope of bounds collected and evaluated for the GADC by
Khatri, Sharma, and Wilde~\cite{KhatriSharmaWilde2020}, and the later
Fang--Fawzi bound~\cite{FangFawzi2021}:
\begin{align}
 U_{\mathrm{KSW}}(\gamma,N)
 &:=\min\{C_\beta,C_{\mathrm{cov}},C_{\mathrm{EB}}^{\mathrm{TP}},
                         C_{\mathrm{Fil}},C_E\},
 \label{eq:ksw-envelope}\\
 U_{\mathrm{FF}}(\gamma,N)
 &:=\widehat\Upsilon_{1+2^{-10}}(\G_{\gamma,N}).
 \label{eq:ff-bound}
\end{align}
The SDP bounds $C_\beta$ were introduced by Wang, Xie,
and Duan~\cite{WangXieDuan2018}; their equality and analytic evaluation
for the GADC were established in~\cite{KhatriSharmaWilde2020}.
The bounds $C_{\mathrm{cov}}$ and $C_{\mathrm{EB}}^{\mathrm{TP}}$ use
the approximate-covariance and approximate-entanglement-breaking methods
of Leditzky, Kaur, Datta, and Wilde~\cite{LeditzkyKaurDattaWilde2018}.
The term $C_{\mathrm{Fil}}$ is Filippov's nonunital-qubit
bound~\cite{Filippov2018}.
Finally, $C_E$ is the entanglement-assisted classical
capacity~\cite{BennettShorSmolinThapliyal2002,Holevo2002EA}, with the
GADC evaluation due to Hou and Fang~\cite{HouFang2007EA}.
In our implementation, $C_{\mathrm{EB}}^{\mathrm{TP}}$ uses a
trace-preserving entanglement-breaking comparison channel, and
$C_{\mathrm{Fil}}$ follows the population convention
in~\cite{Filippov2018}.
The Fang--Fawzi bound $U_{\mathrm{FF}}$ uses geometric R\'enyi
divergence at order $1+2^{-10}$ \cite{FangFawzi2021}. 

Conjugating the input and output
by the Pauli operator $X$ exchanges $N$ and $1-N$, so the capacity is
symmetric under this exchange. We therefore restrict numerical
comparisons to $N\leq1/2$. 
Figure~\ref{fig:rates} presents six representative values of $N$,
with $U_{\mathrm{KSW}}$ and $U_{\mathrm{FF}}$ shown separately. The improvement is clear and significant, providing almost tight estimation when $N > 0$.

Figure~\ref{fig:gaps} shows the heatmap for the residual gap $U_G - \chi$.
The capacity equals \eqref{eq:holevo-gadc} at $\gamma=0,1$, on the unital
line $N=1/2$, and in the entanglement-breaking region
\cite{KhatriSharmaWilde2020}
\begin{align}
 1-\gamma\leq\gamma^2N(1-N).
 \label{eq:eb-region}
\end{align}
Our bound also recovers equality throughout this region. Indeed,
\eqref{eq:eb-region} gives $c_{\gamma,N}=0$, so
$\ell_{\gamma,N}(q)=\epsfun(D_{\gamma,N}(q))$.
This function is convex in $q$, since $2\sqrt{D_{\gamma,N}(q)}$ is convex
and $x\mapsto\epsfun(x^2/4)$ is convex and increasing~\cite{Wootters1998}.
Thus $g_{\gamma,N}^{E}=\ell_{\gamma,N}$, and
Theorem~\ref{thm:main-capacity} yields
$U_G(\gamma,N)=\chi(\G_{\gamma,N})$.

\begin{figure}[H]
 \centering
 \includegraphics[width=\textwidth]{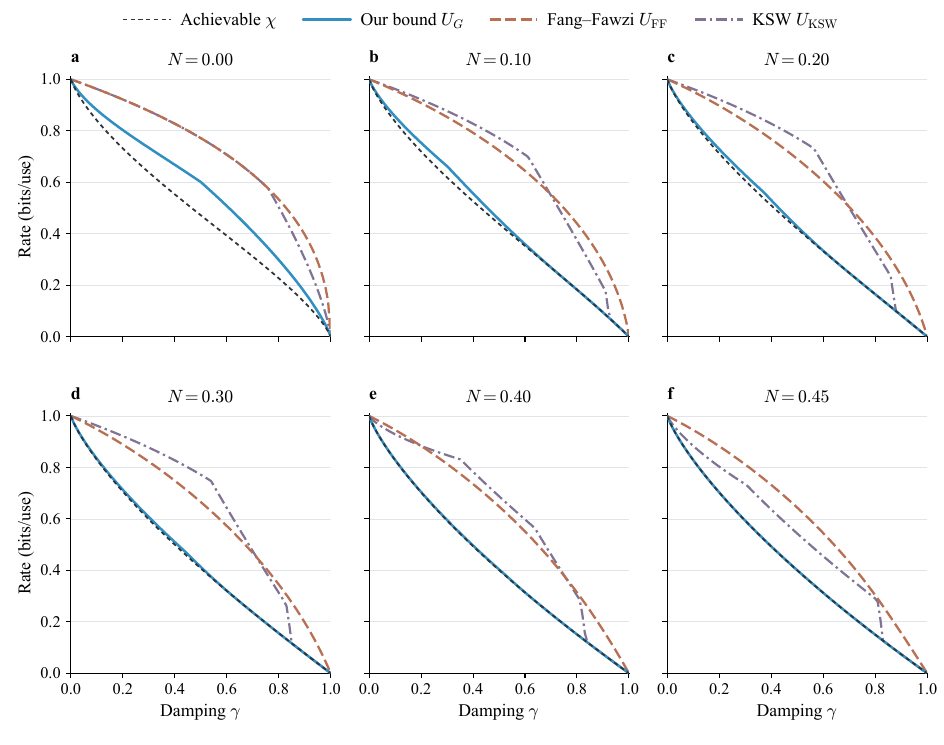}
 \caption{Classical-capacity bounds at
 different values of $N$.
 The upper bound $U_G$ is compared with the Holevo rate $\chi$, the bound
$U_{\mathrm{KSW}}$ in~\cite{KhatriSharmaWilde2020}, and the
 Fang--Fawzi bound $U_{\mathrm{FF}}$ in~\cite{FangFawzi2021}.
}
 \label{fig:rates}
\end{figure}

\begin{figure}[H]
 \centering
 \includegraphics[width=\textwidth]{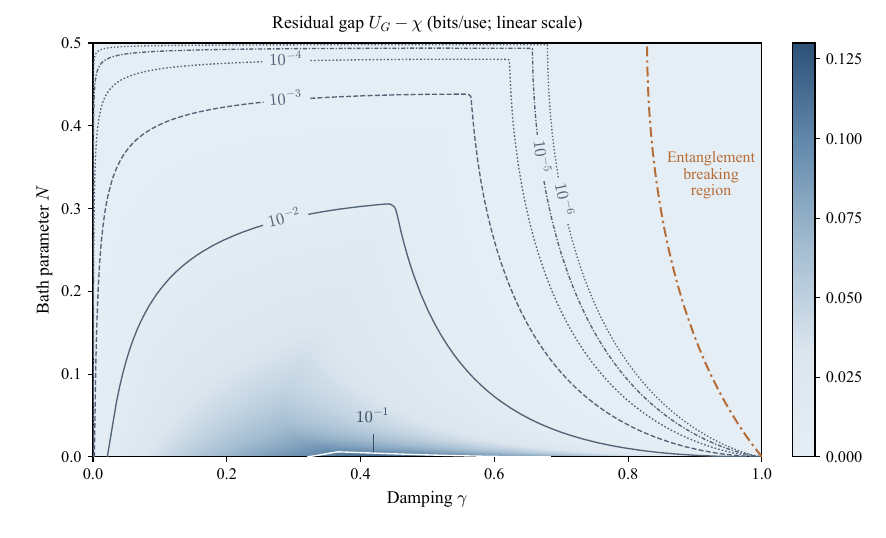}
 \caption{Residual gap $U_G-\chi$, in bits per channel use. Numerical contours mark gaps of $10^{-1}$, $10^{-2}$, $10^{-3}$, $10^{-4}$, $10^{-5}$,
 and $10^{-6}$ bits. The orange
 curve marks the entanglement-breaking boundary in \eqref{eq:eb-region}.}
 \label{fig:gaps}
\end{figure}

\bigskip
\paragraph{Acknowledgements.}
K.F. is supported in part by the National Natural Science Foundation of China (Grants No. 92470113 and 12404569), the Shenzhen Science and Technology Program (Grants No. QNXMB20250701091826036 and JCYJ20240813113519025), the Shenzhen Fundamental Research Program (Grant No. JCYJ20241202124023031), the General R\&D Projects of 1+1+1 CUHK-CUHK(SZ)-GDST Joint Collaboration Fund (Grant No. GRDP2025-022), the Guangdong Provincial Quantum Science Strategic Initiative (Grant No. GDZX2503001), and the University Development Fund (Grant No. UDF01003565). 
We acknowledge that OpenAI GPT-6 Astra was used to explore proof strategies, plot figures and support the writing of the manuscript. We
have verified AI-assisted material to the best of our knowledge
and take full responsibility for the content of this work.

\bibliographystyle{alpha}
\bibliography{references}

\begin{appendices}
\section{Optimality of the correlation penalties}
\label{sec:penalty-optimality}

We show that the penalties in Theorem~\ref{thm:main-capacity} are optimal
among lower bounds depending only on the input occupation.
Fix $\gamma,N\in[0,1]$ and write
$\omega(q,z)=V_{\gamma,N}\rho(q,z)V_{\gamma,N}^\dagger$.

\begin{boxproposition}[Optimality at fixed input occupation]
\label{prop:penalty-optimality}
For every $q\in[0,1]$ and $M\in\{E,B\}$,
\begin{align}
 g_{\gamma,N}^{M}(q)
 &=G^{M}(\omega(q,0))
 =\min_{|z|^2\leq q(1-q)}G^{M}(\omega(q,z)).
 \label{eq:penalty-optimality}
\end{align}
\end{boxproposition}

The environment-measurement penalty is attained by pairing opposite
coherences and convexifying over the input occupation. For the
output-measurement penalty, equality follows from optimality of the
Pauli-$Y$ measurement, which we establish first.

\begin{lemma}[Optimal output measurement]
\label{lem:optimal-output-measurement}
For every input state $\rho(q,z)$,
\begin{align}
 C^{B}(\omega(q,z))
 &=S\!\left(\G_{\gamma,N}^{c}(\rho(q,z))\right)
   -\epsfun\!\left(\gamma\det\rho(q,z)\right).
 \label{eq:exact-output-correlation}
\end{align}
If $z$ is real, measuring $B$ in the Pauli-$Y$ basis is optimal.
For a diagonal input, every equatorial projective measurement is optimal.
\end{lemma}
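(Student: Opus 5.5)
We will prove the lemma by reducing $C^{B}$ to an optimization over rank-one effects on $B$ and then showing that the conditional environment entropy is the same for every rank-one effect. The POVMs in the definition of $C^{B}$ may be refined to rank-one effects without decreasing the correlation, because $S$ is concave and the refined conditional states average to the coarse one. Thus it suffices to consider effects $M_x=\mu_x\proj{\phi_x}$ on $B$.

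For such an effect, the argument used for \eqref{eq:output-measurement-conditional-entropy} applies unchanged. The unnormalized conditional state $\Tr_B[(M_x\otimes I_E)\omega]$ has the same nonzero spectrum as $\sqrt{\rho}\,\G_{\gamma,N}^{\dagger}(M_x)\sqrt{\rho}$. Its normalized determinant is therefore $\det\rho\,\det\G_{\gamma,N}^{\dagger}(M_x)/p_x^2$, where $p_x=\Tr[\rho\,\G_{\gamma,N}^{\dagger}(M_x)]$. The task is to bound the average $\sum_x p_x\epsfun(\cdot)$ from above by $\epsfun(\gamma\det\rho)$. Since $\sum_xp_xS(\omega_E^x)\leq S(\omega_E)$, the lower bound $C^{B}\geq0$ is trivial, so the real content is the upper bound on $C^{B}$, which is exactly the statement that no measurement beats the value attained in \eqref{eq:output-measurement-fixed-correlation}.

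\textbf{Key computation.} Write the Heisenberg-picture map on a rank-one projector $\proj{\phi}$ with Bloch vector $\vec n$. The adjoint GADC sends $\tfrac12(I+\vec n\cdot\vec\sigma)$ to
\[
\tfrac12\Big(\big(1+\gamma(1-2N)n_z\big)I+\sqrt{1-\gamma}\,(n_xX+n_yY)+(1-\gamma)n_zZ\Big).
\]
We will compute its determinant $d(\vec n)$ and the ratio $\det\rho\,d(\vec n)/p^2$. The claim to establish is
\[
\sum_xp_x\,\epsfun\!\Big(\frac{\det\rho\,d_x}{p_x^2}\Big)\geq\epsfun(\gamma\det\rho)
\]
for every rank-one POVM. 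The plan is to use concavity-type properties of $\epsfun$, namely that $x\mapsto \epsfun(x^2/4)$ is convex increasing, which was already used in Section~\ref{sec:numerics}. Rewriting the summand as $p\,F\big(2\sqrt{\det\rho\,d}/p\big)$ with $F(y)=\epsfun(y^2/4)$ gives a perspective of a convex function. Jensen's inequality then yields $\sum_xp_xF(y_x)\geq F\big(\sum_x 2\sqrt{\det\rho\,d_x}\big)$, which reduces the problem to showing
\[
\sum_x\sqrt{d(\vec n_x)}\,\mu_x/2\geq\sqrt{\gamma}/2,
\]
with the $\mu_x$ absorbed appropriately. Completeness gives $\sum_x\mu_x\vec n_x=0$ and $\sum_x\mu_x=2$. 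So the remaining inequality is that $\sqrt{d(\vec n)}$, extended homogeneously, is convex, and that its minimum under these constraints is attained by antipodal equatorial pairs, where $d=\gamma/4$ independently of the equatorial direction.

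\textbf{Main obstacle.} The step I expect to be hardest is verifying this last inequality. The homogeneous extension $\vec v\mapsto\sqrt{\det\G^\dagger(\tfrac12(|\vec v|I+\vec v\cdot\vec\sigma))}$ must be shown to satisfy $\sum_x\mu_x\sqrt{d(\vec n_x)}\geq\sqrt\gamma$ whenever the vectors sum to zero. I would try to write $\sqrt{\det}$ of a $2\times2$ positive matrix as $\min_{\det K=1,K>0}\tfrac12\Tr(KA)$, a concave representation. Pairing each $\vec n$ with its antipode, this should reduce to the two-outcome case, where $d(\vec n)+d(-\vec n)$-type expressions can be computed explicitly and minimized at $n_z=0$.

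\textbf{Optimality claims.} Equality in the Jensen step holds when all $y_x$ coincide. For real $z$, the Pauli-$Y$ measurement attains this with $p_x=\tfrac12$, as shown in \eqref{eq:output-measurement-effects}. For a diagonal input, every equatorial pair gives $p_x=\tfrac12$ and $d=\gamma/4$, so all such measurements are optimal. This yields the formula in \eqref{eq:exact-output-correlation} together with the two optimality statements.
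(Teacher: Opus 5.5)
Your reduction matches the paper's up to the last step: you refine to rank-one effects, use the spectral identity for $S(\omega_E^x)$, apply Jensen to the convex increasing map $y\mapsto\epsfun(y^2/4)$, and reduce to $\sum_x\mu_x\sqrt{d(\vec n_x)}\geq\sqrt\gamma$ under $\sum_x\mu_x=2$ and $\sum_x\mu_x\vec n_x=0$. That inequality is the whole substance of the lemma. You leave it open, and none of the routes you sketch reach it.

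\begin{itemize}
\item The representation $\sqrt{\det A}=\min_{K>0,\det K=1}\tfrac12\Tr(KA)$ shows $\sqrt{\det}$ is concave and $1$-homogeneous, hence superadditive. That gives only $\sum_x\sqrt{\det\G_{\gamma,N}^{\dagger}(M_x)}\leq\sqrt{\det\G_{\gamma,N}^{\dagger}(I)}=1$, which is the wrong direction.
\item Convexity of the homogeneous extension $\Phi(\vec v)$ in $\vec v$ alone yields only $\sum_x\Phi(\mu_x\vec n_x)\geq\Phi(0)=0$. It discards the normalization $\sum_x\mu_x=2$, and that normalization is exactly where $\sqrt\gamma$ comes from.
\item Antipodal pairing cannot reduce a general rank-one POVM, such as a trine or tetrahedral one, to the two-outcome case.
\end{itemize}

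The missing idea is a pointwise affine minorant on rank-one effects. Explicitly, $d(\vec n)=\frac{\gamma}{4}\big([1+(1-2N)n_z]^2+4(1-\gamma)N(1-N)n_z^2\big)$. Drop the nonnegative second term and use $1+(1-2N)n_z\geq0$ to get
\[
\mu_x\sqrt{d(\vec n_x)}\geq\frac{\sqrt\gamma}{2}\,\mu_x\big[1+(1-2N)n_{x,z}\big].
\]
The right-hand side equals $\frac{\sqrt\gamma}{2}\Tr\big[M_x(I+(1-2N)Z)\big]$, which is linear in the effect. By completeness it sums to $\frac{\sqrt\gamma}{2}(2+0)=\sqrt\gamma$; this is the paper's argument.

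Your convexity instinct also works if you apply it to the right function. The map $n_z\mapsto\sqrt{d(\vec n)}$ is the Euclidean norm of an affine function of $n_z$, hence convex on $[-1,1]$. Jensen with the probability weights $\mu_x/2$, whose $n_z$-mean is zero, then gives $\sum_x\mu_x\sqrt{d(\vec n_x)}\geq2\sqrt{d(0)}=\sqrt\gamma$.

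Finally, for complex $z$ you still need a measurement that attains the bound to get equality in the formula; your last paragraph covers only real and diagonal inputs. An equatorial projective measurement orthogonal to the input's equatorial Bloch component supplies it, since it has $p_x=\tfrac12$ and $d=\gamma/4$. The phase covariance used in the paper does the same.
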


\begin{proof}
Let $\delta=\det\rho(q,z)$. If $\delta=0$, any rank-one output measurement
leaves pure conditional environment states, and the claim follows from
$\epsfun(0)=0$. We therefore assume $\delta>0$.

Refining a POVM cannot increase the average conditional entropy, so it
suffices to consider rank-one effects on $B$. Write these as
$M_x=w_x(I+\boldsymbol n_x\cdot\boldsymbol\sigma)$, where
$\boldsymbol\sigma=(X,Y,Z)$, $w_x\geq0$, $|\boldsymbol n_x|=1$,
$\sum_xw_x=1$, and $\sum_xw_x\boldsymbol n_x=0$.
With $u_x=n_{x,z}$, the channel action gives
\begin{align}
 \det\G_{\gamma,N}^{\dagger}(M_x)
 &=\gamma w_x^2\left(
 [1+(1-2N)u_x]^2+4(1-\gamma)N(1-N)u_x^2\right)\nonumber\\
 &\geq\gamma w_x^2[1+(1-2N)u_x]^2.
 \label{eq:output-effect-determinant}
\end{align}
Consequently,
\begin{align}
 \sum_x\sqrt{\det\G_{\gamma,N}^{\dagger}(M_x)}
 &\geq\sqrt\gamma\sum_xw_x[1+(1-2N)u_x]
 =\sqrt\gamma.
 \label{eq:conditional-determinant-bound}
\end{align}
Here $1+(1-2N)u_x\geq0$, and the equality uses POVM completeness.

Let $p_x$ and $\omega_E^x$ denote the outcome probabilities and
normalized conditional environment states.
The direct spectral argument in Step~4 applies to each rank-one
effect $M_x$: the nonzero spectrum of $\omega_E^x$ equals that of
$\sqrt{\rho(q,z)}\,\G_{\gamma,N}^{\dagger}(M_x)\sqrt{\rho(q,z)}/p_x$.
This $2\times2$ matrix has trace one, so
\begin{align}
 S(\omega_E^x)
 &=\epsfun\!\left(
 \delta\,\frac{\det\G_{\gamma,N}^{\dagger}(M_x)}{p_x^2}\right).
 \label{eq:general-output-conditional-entropy}
\end{align}
Zero-probability outcomes have vanishing adjoint effects and are omitted
from conditional-state sums. It follows that
\begin{align}
 \sum_xp_xS(\omega_E^x)
 &=\sum_xp_x\epsfun\!\left(
 \delta\,\frac{\det\G_{\gamma,N}^{\dagger}(M_x)}{p_x^2}\right)\geq\epsfun\!\left(
       \delta\left[\sum_x\sqrt{\det\G_{\gamma,N}^{\dagger}(M_x)}\right]^2\right)
 \geq\epsfun(\gamma\delta).
 \label{eq:optimal-output-conditional-entropy}
\end{align}
The first inequality uses \eqref{eq:general-output-conditional-entropy}
and convexity of $s\mapsto\epsfun(s^2)$ on $[0,1/2]$~\cite{Wootters1998}.
The second uses monotonicity and \eqref{eq:conditional-determinant-bound}.

For real $z$, the Pauli-$Y$ measurement attains equality by
\eqref{eq:output-measurement-conditional-entropy}; phase covariance
supplies the corresponding rotated measurement for complex $z$.
When $z=0$, every equatorial projective measurement has probabilities
$1/2$ and adjoint-effect determinants $\gamma/4$, so all are optimal.
\end{proof}

\begin{proof}[Proof of Proposition~\ref{prop:penalty-optimality}]
By \eqref{eq:exact-G} and \eqref{eq:output-measurement-G-bound}, it suffices
to prove that both lower bounds are attained at $z=0$.

\medskip
\textbf{1. Environment-measurement penalty.}
For any $\varepsilon>0$, the definition of the lower convex envelope
provides a finite ensemble $\{\lambda_j,q_j\}$ such that
\begin{align}
 \sum_j\lambda_j&=1,\qquad \sum_j\lambda_jq_j=q,\qquad
 \sum_j\lambda_j\ell_{\gamma,N}(q_j)
 \leq g_{\gamma,N}^{E}(q)+\varepsilon.
 \label{eq:penalty-envelope-ensemble}
\end{align}
Choose a minimizing determinant $\delta_j$ in \eqref{eq:ell-thermal},
which exists by continuity on a compact interval, and set
$z_j=\sqrt{q_j(1-q_j)-\delta_j}$.
Then
\begin{align}
 C^{E}(\omega(q_j,z_j))
 &=C^{E}(\omega(q_j,-z_j))=\ell_{\gamma,N}(q_j).
 \label{eq:paired-correlation-cost}\\
 \omega(q,0)
 &=\sum_j\frac{\lambda_j}{2}
   \left[\omega(q_j,z_j)+\omega(q_j,-z_j)\right].
 \label{eq:phase-paired-decomposition}
\end{align}
The correlation identity follows from \eqref{eq:exact-C}, and the
decomposition follows by cancellation of opposite coherences.
Evaluating the convex roof on this decomposition gives
\begin{align}
 G^{E}(\omega(q,0))
 &\leq\sum_j\lambda_j\ell_{\gamma,N}(q_j)
 \leq g_{\gamma,N}^{E}(q)+\varepsilon.
 \label{eq:environment-penalty-attainment}
\end{align}
Mixed components are allowed in \eqref{eq:convex-roof-correlation}.
Letting $\varepsilon\to0$ proves the claim for $G^{E}$.

\medskip
\textbf{2. Output-measurement penalty.}
For a diagonal input,
\begin{align}
 g_{\gamma,N}^{B}(q)
 &\leq G^{B}(\omega(q,0))
 \leq C^{B}(\omega(q,0))
 =g_{\gamma,N}^{B}(q).
 \label{eq:output-penalty-attainment}
\end{align}
The inequalities use \eqref{eq:output-measurement-G-bound} and the
single-component decomposition in the convex roof. The equality follows
from Lemma~\ref{lem:optimal-output-measurement} and
\eqref{eq:output-measurement-g}.
\end{proof}

\end{appendices}
\end{document}